\renewcommand{\footnotesize}{\tiny}
\preto\align{\par\nobreak\small\noindent}
\theoremstyle{plain}
\newtheorem{thm}{Theorem}
\theoremstyle{plain}
\newtheorem{cor}{Corollary}
\theoremstyle{definition}
\begin{document}

\title{Graph Representation Ensemble Learning}
% \author{Anonymous \\ XYZ University\\
% \And Anonymous \\ XYZ University\\
% \And Anonymous \\ XYZ University}
\author{Palash Goyal${^*}$ \\ University of Southern California \\ palashgo@usc.edu\\
\And Di Huang${^*}$ \\ University of Southern California \\ dh\_599@usc.edu\\
\And Sujit Rokka Chhetri${^*}$\\ University of California-Irvine\\ schhetri@uci.edu\\
\AND Arquimedes Canedo\thanks{These authors contributed equally to this work.} \\ Siemens Corporate Technology \\ arquimedes.canedo@siemens.com\\
\And Jaya Shree \\ University of Southern California \\ shree@usc.edu\\
\And Evan Patterson \\ Stanford University \\ epatters@stanford.edu}
% \And Jade Master \\ University of California-Riverside\\ jadem@math.ucr.edu}

%  \author{Palash Goyal,\textsuperscript{1}
% Di Huang,\textsuperscript{1}
% Sujit Rokka Chhetri,\textsuperscript{2}
% Arquimedes Canedo,\textsuperscript{3}
% Jaya Shree,\textsuperscript{2}
% Evan Patterson,\textsuperscript{4}
% Jade Master\textsuperscript{5}\\
% \textsuperscript{1}{University of Southern California}\\
% \textsuperscript{2}{University of California-Irvine}\\
% \textsuperscript{3}{Siemens Corporate Technology}\\
% \textsuperscript{4}{Stanford University}\\
% \textsuperscript{5}{University of California-Riverside}\\
% \{palashgo, dh\_599, shree\}@usc.edu,
% schhetri@uci.edu,
% arquimedes.canedo@siemens.com,
% epatters@stanford.edu,
% jadem@math.ucr.edu}

% \title{dyngraph2vec: Capturing Network Dynamics using Dynamic Graph Representation Learning}
% \author{Anonymous Author \\ Institute \\ email@domain\\
% \And Anonymous Author \\ Institute \\ email@domain\\
% \And Anonymous Author \\ Institute \\ email@domain}

\maketitle
Representation learning on graphs has been gaining attention due to its wide applicability in predicting missing links, and classifying and recommending nodes. Most embedding methods aim to preserve certain properties of the original graph in the low dimensional space. However, real world graphs have a combination of several properties which are difficult to characterize and capture by a single approach. In this work, we introduce the problem of graph representation ensemble learning and provide a first of its kind framework to aggregate multiple graph embedding methods efficiently. We provide analysis of our framework and analyze -- theoretically and empirically -- the dependence between state-of-the-art embedding methods. We test our models on the node classification task on four real world graphs and show that proposed ensemble approaches can outperform the state-of-the-art methods by up to 8\% on macro-F1. We further show that the approach is even more beneficial for underrepresented classes providing an improvement of up to 12\%.   
\section{Introduction}\label{sec:introduction}
Graphs are used to represent data in various scientific fields including social sciences, biology and physics~\cite{Gehrke2003,freeman2000visualizing,theocharidis2009network,goyal2018recommending}.
Such representation allows researchers to gain insights about their problem. The most common tasks on graphs are link prediction, node classification and visualization.
For example, link prediction in the social domain is used to determine friendships between people. Node classification in the biology domain is used to identify genes of proteins. Similarly, visualization is used to identify communities and structure of a graph.
Recently, significant amount of work has been devoted to learning low dimensional representation of nodes in the graphs to allow the use of machine learning techniques to perform the tasks on graphs. Graph representation learning techniques embed each node in the network in a low dimensional space, and map link prediction and node classification in the network space to a nearest neighbor search and vector classification in the embedding space~\cite{goyal2017graph}. Several of these techniques have showed state-of-the-art performance on graph tasks~\cite{Grover2016,Ou2016}.

\begin{figure}
\centering
  \includegraphics[width=0.4\textwidth]{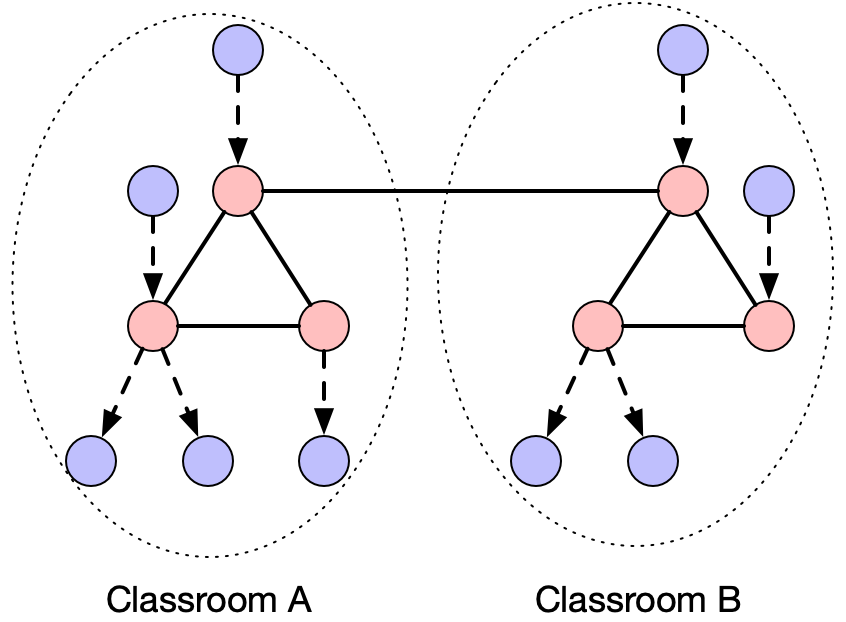}
  \vspace{-1em}
  \caption{Example motivating the need of ensemble learning. The graph represents interactions in classrooms (in red) and family trees of students (in purple). Such complex combination of community and structure requires combination of multiple embedding methods for accuracy.}
  \label{fig:intro_example}
  \vspace{-.6cm}
\end{figure}

State-of-the-art techniques in graph representation learning define some characteristics of the graphs they aim to capture and define an objective function to learn these features in the low-dimensional embedding.
For example, HOPE~\cite{Ou2016} preserves higher order proximity between nodes using the singular value decomposition of the similarity matrix.
Similarly, \textit{node2vec}~\cite{Grover2016} captures the similarity of nodes using random walks on the graph.
However, real world graphs do not follow a simple structure and can be layered with several categories of properties with complex interactions between them.
It has been shown that no single method outperforms other methods on all network tasks and data sets~\cite{goyal2017graph}.
We further illustrate this by the example in Figure~\ref{fig:intro_example} with a social network from two classrooms (represented by the pink color).
We also show the family links of individual students in the classroom and represent family members outside the classroom (represented by the the blue color).
Here, we consider the task of multi-label node classification with the classes classroom and role in family.
This network is complex and has both community and structural properties.
Methods such as HOPE~\cite{Ou2016} which preserve community can effectively classify the nodes into classrooms but perform poorly on family links which follow structure.
On the other hand, structure preserving methods can classify the role of an individual student in the family but puts nodes in the same classroom into separate categories.

In this work, we introduce graph representation ensemble learning.
Given a graph and a list of methods capturing various properties of the graph, we aim to learn a representation of nodes which can combine embeddings from each method such that it outperforms each of the constituent method in terms of prediction performance.
Ensemble methods have been very successful in the field of machine learning.
Methods such as AdaBoost~\cite{ratsch2001soft} and Random Forest~\cite{liaw2002classification} have shown to be much more accurate than the individual classifiers that compose them.
It has been shown that combining even the simplest but diverse classifiers can yield high performance.
However, to the best of our knowledge, no work has focused on ensemble learning on graph representation learning.

Here, we formally introduce ensemble learning on graph representation methods and provide a framework for it.
We first provide a motivation example to show that a single embedding approach is not enough for accurate predictions on a graph task and combining methods can yield improvement in performance.
We then formalize the problem and define a method to measure correlations of embeddings obtained from various approaches.
Then, we provide an upper bound on the correlation assuming certain properties of the graph.
The upper bound is used to establish the utility of our framework.
% We propose two approaches for this: (i) baseline approach, and (ii) coarsening approach.
% The baseline approach is a simple concatenation of embeddings obtained by methods which preserve diverse set of properties.
% For this, we propose the theorem that the necessary and sufficient condition for an ensemble of graph classifiers to be more accurate than the constituent members is if the individual classifiers are accurate and diverse.
% Further, we propose the coarsening approach which coarsens the graphs at multiple levels and combines embeddings from different level. This is done to increase the diversity in classifiers and capture higher order and structural features with ease.
% To ensure scalability of ensemble in graphs, we propose two heuristics: (i) greedy concatenation, and (ii) function approximated hyperparameter search.
% The greedy concatenation reduces time complexity of ensemble from $O(1+d)^n$ to $O(nd)$ where $n$ is the number of embedding methods and $d$ is the cardinality of the set of dimensions to be used rendering the ensemble linear.
% Function approximated hyperparameter search approximates the hyperparameters for subsequent layers of coarsening reducing the overhead.
We focus our experiments on the task of node classification.
We compare our method with the state-of-the-art embedding methods and show its performance on 4 real world networks including collaboration networks, social networks and biology networks. 
Our experiments show that the proposed ensemble approaches outperform the state-of-the-art methods by 8\% on macro-F1.
We further show that the approach is even more beneficial for underrepresented classes and get an improvement of 12\%.

Overall, our paper makes the following contributions:
\begin{enumerate}
	\item We introduce ensemble learning in the field of graph representation learning.
    \item We propose a framework for ensemble learning given a variety of graph embedding methods.
    \item We provide a theoretical analysis of the proposed framework and show its utility theoretically and empirically.
    \item We demonstrate that combining multiple diverse methods through ensemble achieves state-of-the-art accuracy.
    \item We publish a library, GraphEnsembleLearning~\footnote{https://github.com/dihuang0220/GraphEnsembleLearning}, implementing the framework for graph ensemble learning.
    
\end{enumerate}

% The rest of the paper is organized as follows. 
% Section \ref{sec:related} provides a summary of the methods proposed in this domain and differences with our model. 
% In Section \ref{sec:problem}, we provide the definitions required to understand the problem and models discussed next. 
% Section \ref{sec:method} introduces our  \textit{dyngraph2vec} model. The experimental setup and obtained results are described in Section \ref{sec:res}. 
% Finally, we draw our conclusions and discuss potential applications and future research directions in Section \ref{sec:discussion}.

\section{Related Work}\label{sec:related}
Methods for graph representation learning (aka graph embedding) typically vary in properties preserved by the approach and the objective function used to capture these properties.
Based on the properties, embedding methods can be divided into two broad categories: (i) community preserving, and (ii) structure preserving.
Community preserving approaches aim to capture the distances in the original graph in the embedding space.
Within this category, methods vary on the level of distances captured.
For example, Graph Factorization~\cite{Ahmed2013} and Laplacian Eigenmaps~\cite{belkin2001laplacian} preserve shorter distances (i.e., low order proximity) in the graph, whereas more recent methods such as Higher Order Proximity Embedding (HOPE)~\cite{Ou2016} and GraRep~\cite{cao2016deep} capture longer distances (i.e., high order proximity).
Structure preserving methods aim to understand the structural similarity between nodes and capture role of each node.
\emph{node2vec}~\cite{Grover2016} uses a mixture of breadth first and depth first search for this.
Deep learning methods such as Structural Deep Network Embedding (SDNE)~\cite{Wang2016} and Deep Network Graph Representation (DNGR)~\cite{cao2016deep} use deep autoencoders to preserve distance and structure.

Based on the objective function, embedding methods can be broadly divided into two categories: (i) matrix factorization, and (ii) deep learning methods.
Matrix factorization techniques represent graph as a similarity matrix and decompose it to get the embedding.
Graph Factorization and HOPE use adjacency matrix and higher order proximity matrix for this.
Deep learning methods, on the other hand, use multiple non-linear layers to capture the underlying manifold of the interactions between nodes.
SDNE, DNGR and VGAE~\cite{kipf2016variational} are examples of these methods.
Some other recent approaches use graph convolutional networks to learn graph structure~\cite{kipf2016semi,bruna2013spectral,henaff2015deep}.

In machine learning, ensemble approaches~\cite{zhou2012ensemble} are algorithms which combine the outputs of a set of classifiers.
It has been shown that ensemble of classifiers are more accurate than any of its individual members if the classifiers are accurate and diverse~\cite{hansen1990neural}.
There are several ways individual classifiers can be combined.
Broadly, they can be divided into four categories: (i) Bayesian voting, (ii) random selection of training examples, (iii) random selection of input features, and (iv) random selection of output labels.
Bayesian voting methods combine the predictions from the classifiers weighted by their confidence.
On the other hand, methods such as Random Forest~\cite{liaw2002classification} and Adaboost~\cite{ratsch2001soft} divide the training data into multiple subsets, train classifiers on each individual subset, and combine the output.
The third category of approaches divide the input set of features available to the learning algorithm~\cite{opitz1999feature}.
Finally, for data with a large number of output labels, some methods divide the set of output labels and learn individual classifiers to learn their corresponding label subset~\cite{ricci1997extending}.

In this work, we extend the concept of ensemble learning to graph representation learning and get insights into the correlations between various graph embedding methods.
Based on this, we propose ensemble methods for them and show the improvement in performance on node classification task.

\section{Motivating Example}\label{sec:motivation}
This section presents a motivational case study to highlight the effectiveness of the proposed graph representation ensemble learning on a synthetic dataset. We present the analysis by utilizing four synthetic graphs: (a) Barabasi-Albert, (b) Random Geometry (c) Stochastic Block Model, and (d) Watts Strogatz graph (see Figure \ref{fig:motiv1}). Each of these graphs exhibits a specific structural property. We use a spring layout to further elucidate the difference in the structural properties of the four different synthetic graphs. The Barabasi-Albert graph makes new connections through preferential attachment using the degree of the existing nodes. Watts Strogatz graph generates a ring of $n$ graphs with the addition of edges of each nodes with its $k$ neighbors. Stochastic Block Model creates community clusters by preserving the community structure. The Random Geometry graph generates $n$ nodes and add $m$ edges by utilizing the spatial proximity among the nodes as a measure. 
\begin{figure}[!ht]
\centering
\vspace{-1em}
\includegraphics[width=0.48\textwidth]{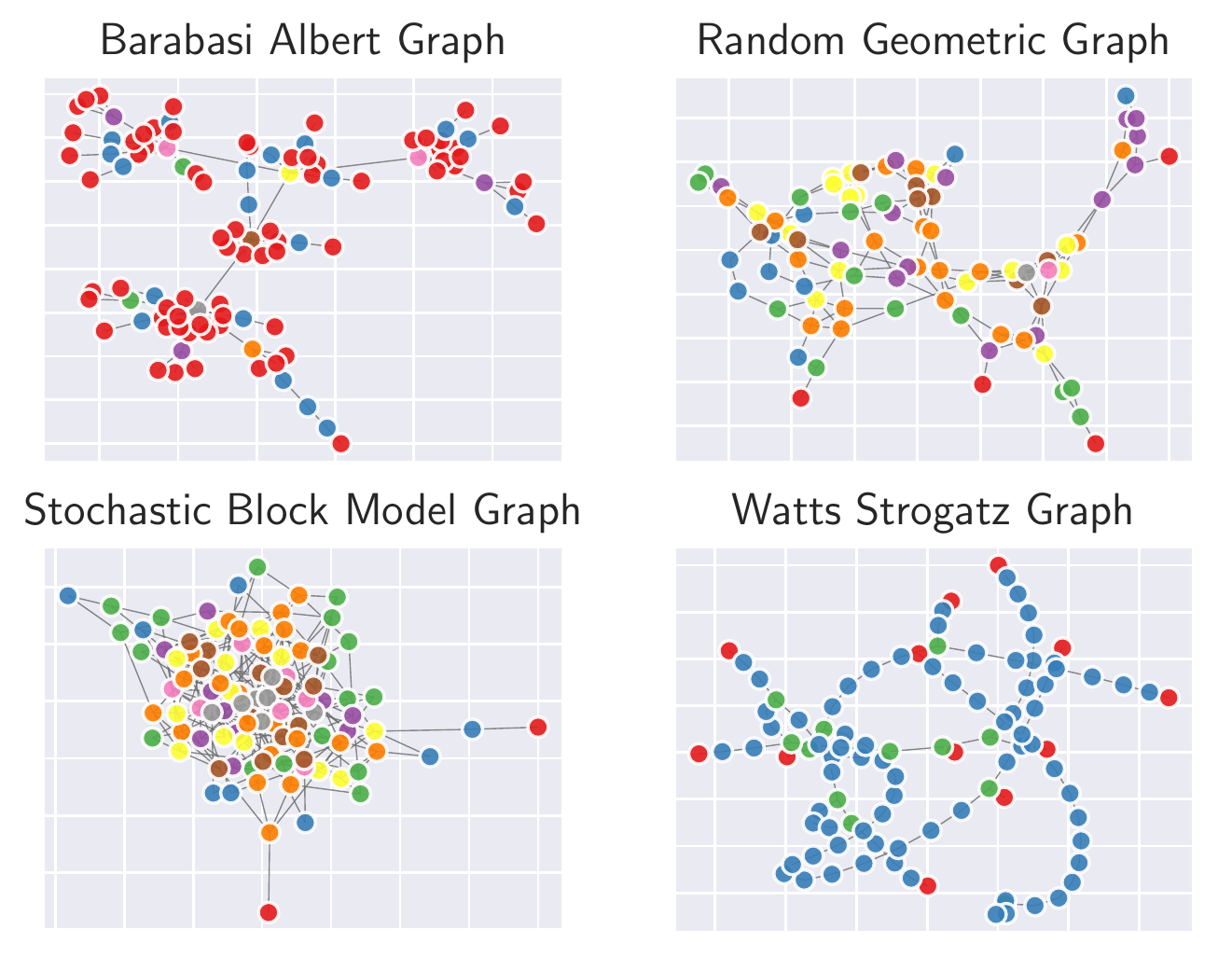}
\vspace{-1.5em}
\caption{Four synthetic graph with different graph properties (with node color representing different degrees) initially drawn using the spring layout.}
\vspace{-1em}
\label{fig:motiv1}
\end{figure}

We have generated each of the synthetic graphs with 100 nodes each. As mentioned earlier, different embedding algorithms such as Graph Factorization, Laplacian Eigenmaps, High Order Proximity Preserving, Structural Deep Network Embedding and Node2vec capture various characteristics of the graphs. Hence, a single embedding algorithm may not be able to capture the entire complex interaction. To test this hypothesis we have created two node labels for the synthetic graph. The first label is based on the degree of the graph, whereas the second label is based on the closeness centrality measure \cite{freeman1978centrality} of the graph. The centrality values are binned and the respective bins are used as node labels.

\begin{figure}[h!]%
    \centering
    \vspace{-1em}
    \subfloat[Original layout showing added edges]{{\includegraphics[width=0.24\textwidth]{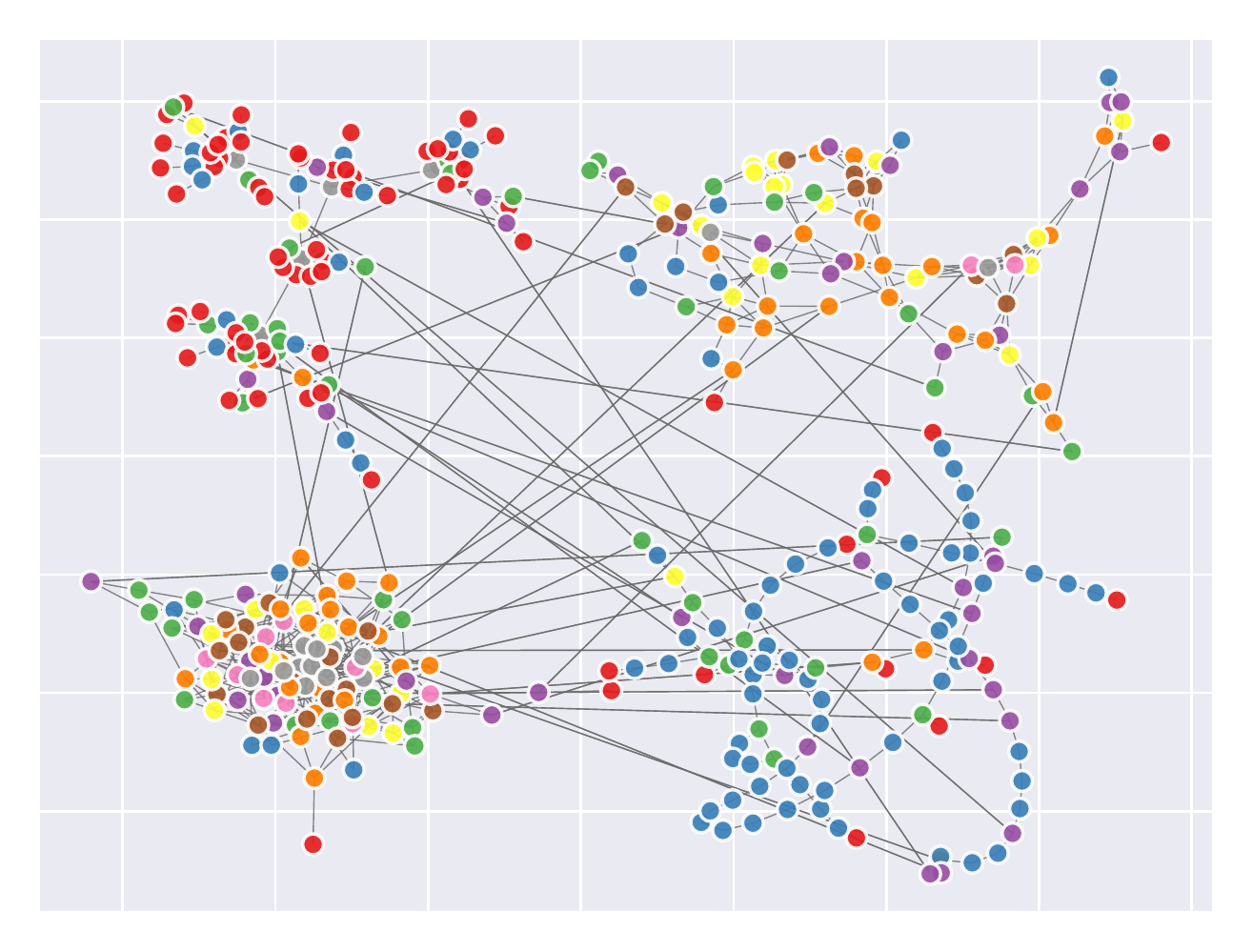}}}%
    % \qquad
    \subfloat[New spring layout]{{\includegraphics[width=0.24\textwidth]{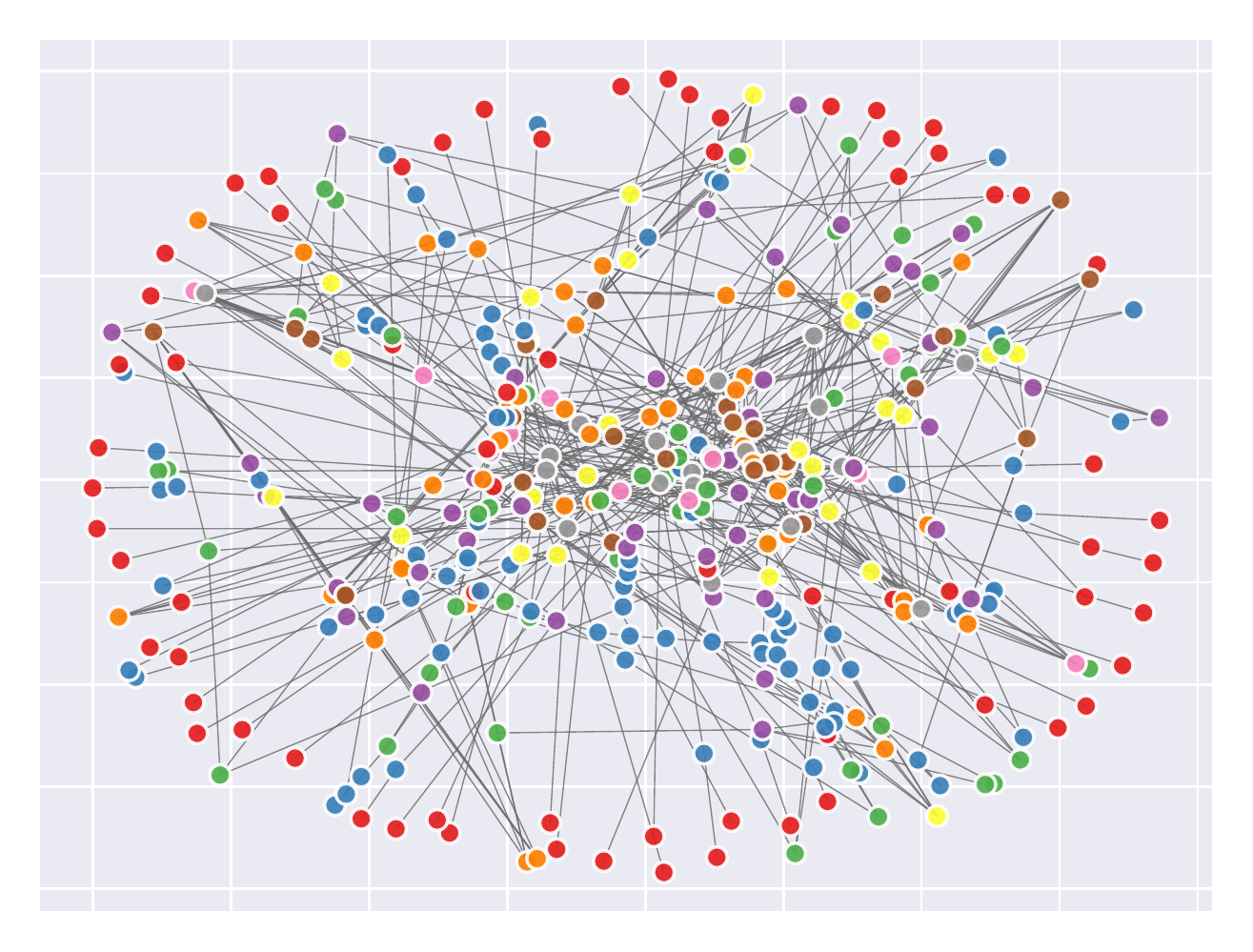}}}%
    \caption{Synthetic graphs merged together (different colors represents the updated node degree).}%
    \vspace{-1em}
    \label{fig:motiv2}%
\end{figure}

To simulate the interaction between different synthetic graphs, we have randomly selected node pairs (equal to 40\% of the total number of nodes) and added edges between them (with a probability threshold of 0.3). The addition of the edges are shown in Figure \ref{fig:motiv2}.
% \begin{figure}[!ht]
% \centering
% \vspace{-1em}
% \includegraphics[width=0.45\textwidth]{img/ensemble_merged.pdf}
% \vspace{-1em}
% \caption{Synthetic graphs merged together with addition of random edges (different colors represents the updated node degree).}
% \vspace{-1em}
% \label{fig:architecture}
% \end{figure}

% % \begin{figure}[h!]%
% %     \centering
% %     \subfloat[Synthetic graph with original layout showing added edges]{{\includegraphics[width=0.35\textwidth]{img/ensemble_merged.pdf}}}%
% %     \qquad
% %     \subfloat[Synthetic graph with new spring layout]{{\includegraphics[width=0.35\textwidth]{img/ensemble_merged_spring.pdf}}}%
% %     \caption{Synthetic graphs merged together (different colors represents the updated node degree).}%
% %     \label{fig:example}%
% % \end{figure}

\begin{table}[!ht]
\centering
% \vspace{-0.5em}

\begin{tabular}{c|c|c}
\hline
Methods & Dimensions  &  Macro-F1\\
\hline
gf & 128  & 0.127\\
lap & 32  & 0.055\\
hope & 128  & 0.157\\
sdne & 64  & \underline{0.177} \\
node2vec & 128 & 0.128\\ 
\hline
\multirow{2}{7 em}{\centering{sdne, node2vec hope,gf,lap}} & \multirow{2}{7 em}{\centering{128,64,32,64,64}} & \multirow{2}{5 em}{\centering{\textbf{0.183(3.4\%)}}}\\
& &  \\
\hline
% \multirow{2}{5 em}{\centering{BCVe BCVw}} & \multirow{2}{5 em}{} & \multirow{2}{4.5 em}{\centering{\textbf{0.418(1.5\%)} 0.413}} & \multirow{2}{4.5 em}{\centering{0.181 0.162}}\\ 
% & & & \\
% \hline
\end{tabular}

\caption{Ensemble methods on motivating example with degree labels.}
\label{table:motivationdegree}. 
    \vspace{-1.5em}
\end{table}

The result of the node classification for the degree labels of the merged synthetic graph is shown in Table \ref{table:motivationdegree}. The embedding obtained from the state-of-the-art methods and the ensemble approach is utilized to predict the degree labels. It can be seen that compared to the state-of-the-art algorithms, the ensemble based approach is able to achieve 3.4\% improvement in macro F1 score. Although not significant, it is still able to improve the classification accuracy. 

\begin{table}[!ht]
\centering
 \vspace{-0.5em}
\begin{tabular}{c|c|c}
\hline
   
Methods & Dimensions &  Macro-F1\\
\hline
gf & 64 & 0.108\\
lap & 32 & 0.064\\
hope & 64 & 0.090\\
sdne & 128 & \underline{0.191} \\
node2vec & 128 & 0.142\\ 
\hline
\multirow{2}{7 em}{\centering{sdne, node2vec gf,hope,lap}} & \multirow{2}{7 em}{\centering{128,64,64,32,128}}& \multirow{2}{5 em}{\centering{\textbf{0.215(12.6\%)}}}\\
& & \\
\hline
% \multirow{2}{5 em}{\centering{BCVe BCVw}} & \multirow{2}{5 em}{} & \multirow{2}{4.5 em}{\centering{0.418 \textbf{0.452(4.1\%)} }} & \multirow{2}{4.5 em}{\centering{0.185 0.171}}\\ 
% & & & \\
% \hline
\end{tabular}

\caption{Ensemble methods on motivating example with centrality labels.}
\label{table:motivationcommunity}
    \vspace{-1em}
\end{table}

The classification accuracy results for classifying the centrality measures are shown in Table \ref{table:motivationcommunity}. For this label, it can be observed that the ensemble based method is able to achieve 12.6\% improvement in macro F1-score. Both the macro F1-score proves that the ensemble based approach are able to utilize the best characteristic of different graph embedding algorithm's ability to capture the structure of the network. 
\section{Graph Representation Ensemble Learning}\label{sec:methodology}
% \begin{figure*}[!ht]
% \centering
% \includegraphics[width=0.95\textwidth]{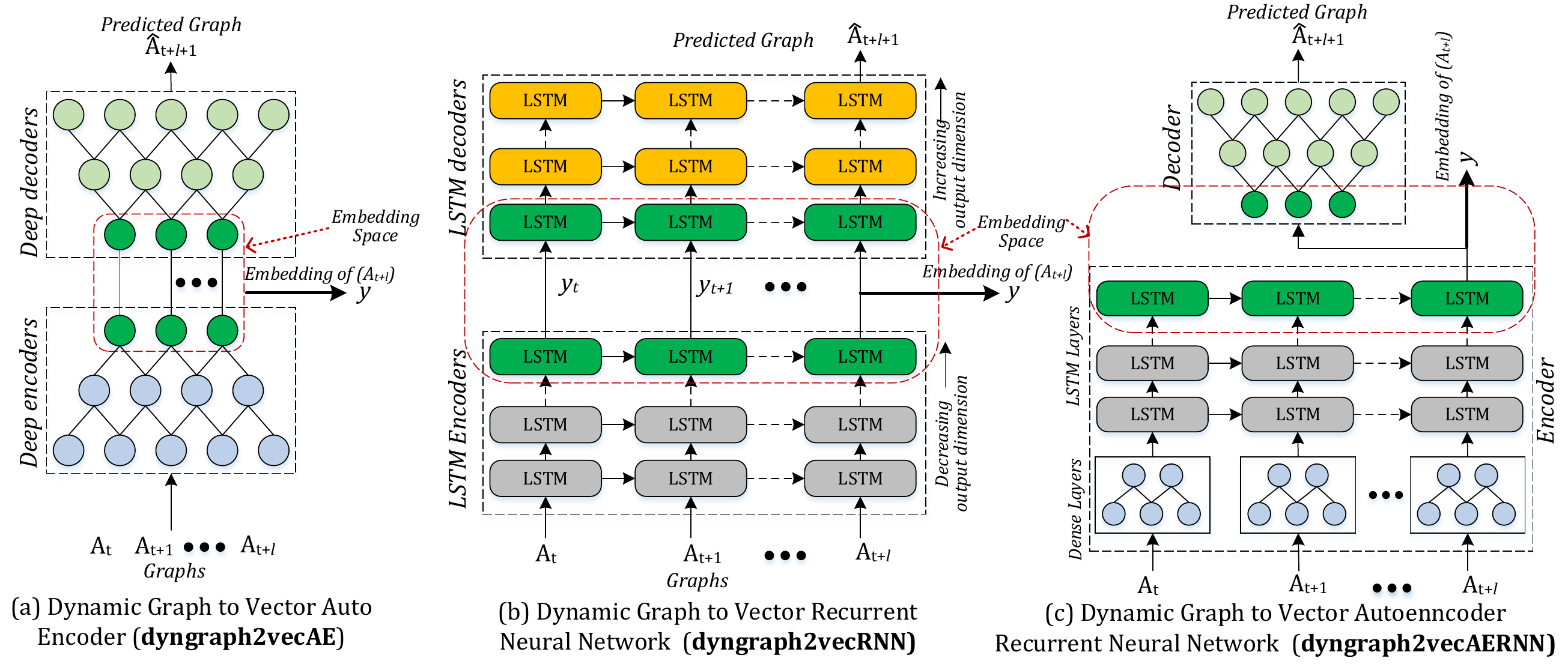}
% \caption{Proposed architectures for dynamic graph embedding.}
% \end{figure*}

In this section, we define the notations and provide the graph ensemble problem statement. We then explain multiple variations of deep learning models capable of capturing temporal patterns in dynamic graphs.
Finally, we design the loss functions and optimization approach.

\subsection{Notations} We define a directed graph as $G = (V, E)$, where $V$ is the vertex set and E is the directed edge set. The adjacency matrix is denoted as $A$. We define the embedding matrix from a method $m$ as $X^m$. The embedding matrix can be used to reconstruct the distance between all pairwise nodes in the graph. We denote this as $S^m$, in which $S_{i, j}^m = \|X_{i, .}^m - X_{j, .}^m\|$.

\subsection{Problem Statement} In this paper, we introduce the problem of ensemble learning on graph representation learning.
We define it as follows: \textit{Given a set of embedding methods $\lbrace m_1, \ldots, m_k \rbrace$ with corresponding embeddings for a graph $G$ as $\lbrace X^{m_1}, \ldots, X^{m_k} \rbrace$ and errors $\lbrace \epsilon_1, \ldots, \epsilon_k \rbrace$ on a graph task $\mathcal{T}$, a graph ensemble learning approach aims to learn an embedding $X^m$ with error $\epsilon$ such that $\epsilon < min(\epsilon_1, \ldots, \epsilon_k)$.}

\subsection{Measuring Graph Embedding Diversity}
Different graph embedding techniques vary in the types of properties of the graphs preserved by them and the model defined.
Broadly, embedding techniques can be divided into: (i) structure preserving, and (ii) community preserving models, defined as follows:

% \textbf{Community Preserving Models} aim to embed nodes 
\defn{ (Community Preserving Models) It aims to embed nodes with lower distance between them closer in the embedding space.}

\defn{ (Structure Preserving Models) It aims to embed structurally similar nodes closer in the embedding space.}

As ensemble accuracy of a combination of methods depends on the diversity of the input methods~\cite{dietterich2002ensemble}, we now establish bounds on the diversity of embedding models.
Graph embedding of a graph $G$ is a matrix $X \in \mathbb{R}^{n\times d}$ where $n$ is the number of nodes and $d$ is the dimension of the embedding.
Thus, we require a diversity measure which can quantify diversity between matrices.
Pearson correlation~\cite{benesty2009pearson} is a popular metric traditionally used to measure diversity of two uni-variate random variables.
It can be generalized to multivariate case and defined as RV coefficient~\cite{robert1976unifying}.
% \defn{\cite{robert1976unifying} (RV Coefficient): Suppose that $X$ and $Y$ are matrices of centered random vectors (column vectors) with covariance matrix given by $\Sigma_{XY} = E(X^T Y)$, the RV coefficient is defined by:

% \begin{align*}
%     RV(X, Y) = \frac{Tr(\Sigma_{XY}\Sigma_{YX})}{\sqrt{Tr(\Sigma_{XX}^2)Tr(\Sigma_{YY}^2)}}.
% \end{align*}}
As RV Coefficient measures linear dependence between the variables and embedding methods can be non-linear in construction, we can use a distance based metric to capture such non-linearity between embeddings:

\defn{\cite{szekely2009brownian} (Distance Covariance): Suppose that $X$ and $Y$ are matrices of centered random vectors (column vectors). Let the $n\times n$ distance matrices $(a_{j, k})$ and $(b{j, k})$ containing all pairwise distances, $a_{j, k} = \| X_j - X_k \|$ and $b_{j, k} = \| Y_j - Y_k \|$. We compute the doubly centered distance matrices $(A_{j, k})$ and $(B_{j, k})$, where $A_{j, k} = a_{j, k} - a_{j, .} - a_{., k} + a_{., .}$ and $B_{j, k} = b_{j, k} - b_{j, .} - b_{., k} + b_{., .}$. The distance covariance is defined as follows:
\begin{align*}
    dCov^2(X, Y) = \frac{1}{n^2}\sum_{j=1}^n \sum_{k=1}^n A_{j, k}B_{j, k}.
\end{align*}
}

\defn{\cite{szekely2007measuring} (Distance Correlation): The distance correlation between random variables $X$ and $Y$ is given as follows:
\begin{align*}
   dCor(X, Y) = \frac{dCov(X, Y)}{\sqrt{dCov(X, X) dCov(Y, Y)}} 
\end{align*}}

Based on this, we obtain the following bound:

\begin{thm}\label{thm:thm1}
Consider two embedding methods $m_1$ and $m_2$ with corresponding embeddings for a graph $G= (V, E)$ as $X^{m_1}$ and $X^{m_2}$, where $
|V| = n$. Let $G$ have a set $V_1$ of structurally similar nodes with $|V_1| = n_1$ and a set $V_2 = V \setminus V_1$ with nodes in multiple communities. If $m_1$ is a purely structural preserving method and $m_2$ preserves both structural and community properties, then distance correlation between the the embeddings has the following bound:
\begin{align*}
    dCor(X^{m_1}, X^{m_2}) < 1 - \frac{n_1}{n}.
\end{align*}
\end{thm}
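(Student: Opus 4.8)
\emph{Proof approach.}
The plan is to recast the inequality as a statement about the doubly centered distance matrices and then exploit the rigidity a purely structural method forces on $V_1$. Let $A$ and $B$ be the doubly centered distance matrices of $X^{m_1}$ and $X^{m_2}$ (as in the definition of $dCov$). Unwinding the definitions, $dCor(X^{m_1},X^{m_2})^2 = \langle A,B\rangle / (\|A\|_F\|B\|_F)$ with $\langle\cdot,\cdot\rangle$ the entrywise inner product; that is, $dCor^2$ is the cosine of the angle between $A$ and $B$ regarded as vectors in $\mathbb{R}^{n\times n}$, and $dCor=1$ exactly when $A$ and $B$ are positively proportional (equivalently, when $X^{m_2}$ is $X^{m_1}$ up to translation, scaling and rotation). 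So the entire argument reduces to showing that $B$ is forced to carry a definite fraction, of order $n_1/n$, of its Frobenius energy in a subspace on which $A$ is trivial.

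First I would extract the constraint imposed by $m_1$. Since $m_1$ is purely structure preserving and the $n_1$ nodes of $V_1$ are structurally similar, $m_1$ collapses them to essentially a single point, so the raw distances $a^{m_1}_{j,k}$ with $j,k\in V_1$ vanish and, for each fixed node $k$, the distance from a $V_1$-node to $k$ does not depend on which $V_1$-node we choose. A short bookkeeping check shows this survives the double centering step: in $A$ all rows indexed by $V_1$ coincide, and by symmetry so do all columns indexed by $V_1$. Hence $A$ lies in the subspace $\mathcal{S}$ of symmetric doubly centered $n\times n$ matrices whose $V_1$-rows (equivalently $V_1$-columns) agree --- the centered distance structures that ``cannot tell the nodes of $V_1$ apart''. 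Therefore
\[
 dCor(X^{m_1},X^{m_2})^2 = \cos\angle(A,B) \;\le\; \frac{\|\Pi_{\mathcal{S}}B\|_F}{\|B\|_F} = \sqrt{1-\rho^2}, \qquad \rho^2 := \frac{\|\Pi_{\mathcal{S}^\perp}B\|_F^2}{\|B\|_F^2},
\]
where $\Pi_{\mathcal S}$ is orthogonal projection onto $\mathcal S$.

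It then remains to lower bound $\rho^2$, which is where the community half of the hypothesis enters: because $m_2$ preserves community and the nodes of $V_1$ lie across several distinct communities, $m_2$ genuinely separates them, so the $V_1$-rows of $B$ are not collinear and $B$ must spend energy in $\mathcal{S}^\perp$. Counting dimensions, $\mathcal{S}$ behaves like the centered distances of only $n-n_1$ ``free'' points, so $\mathcal{S}^\perp$ occupies a fraction $1-(1-n_1/n)^2$ of the ambient space; under the natural modeling that community is the governing signal on $V_1$ (so $m_2$ distributes the $V_1$-rows of $B$ essentially as it would any $n_1$ rows) this gives $\rho^2 \ge 1-(1-n_1/n)^2$, and plugging in yields the bound. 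The main obstacle is exactly this last estimate. One must track how the double centering redistributes the ``$V_1$-spread'' of $m_2$ over all entries of $B$, and one needs the separation hypothesis to be quantitative enough that the energy $B$ places outside $\mathcal S$ does not dip below the required threshold: the clean cosine--projection step above already delivers $dCor \le \sqrt{1-n_1/n}$, and recovering the sharper stated constant $dCor < 1-n_1/n$ requires either reading the claim for $dCor^2$ or strengthening the spreading estimate on $\rho^2$ past its generic value using the full force of community preservation on $V_1$; with no quantitative spreading assumption only $dCor<1$ is available.
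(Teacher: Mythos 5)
Your reformulation of $dCor$ as the cosine of the angle between the doubly centered matrices $A$ and $B$, and the observation that a purely structural method forces all $V_1$-rows (and columns) of $A$ to coincide, are sound steps; but the argument as proposed does not reach the stated bound, as you acknowledge yourself. The missing piece is the lower bound on the energy that $B$ must place outside your subspace $\mathcal{S}$: the hypotheses supply no quantitative separation of the $V_1$ nodes under $m_2$, and the ``natural modeling'' / dimension-counting step ($m_2$ distributes its $V_1$-rows as it would any $n_1$ rows, hence $\rho^2 \ge 1-(1-n_1/n)^2$) is an unproved assumption rather than a consequence of community preservation. Even granting it, your chain only yields $dCor^2 \le 1-\frac{n_1}{n}$, i.e.\ $dCor \le \sqrt{1-n_1/n}$, which is strictly weaker than the claimed $dCor < 1-\frac{n_1}{n}$; so without reinterpreting the theorem as a statement about $dCor^2$, the proof cannot close, and with no quantitative spreading assumption you correctly note only $dCor<1$ is available.

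The paper's own proof goes a different way, and the divergence is instructive. It splits the sums defining $dCov(X^{m_1},X^{m_2})$ and the two self-covariances into the four blocks $V_1\times V_1$, $V_1\times V_2$, $V_2\times V_1$, $V_2\times V_2$; it asserts that the $V_1\times V_1$ block vanishes because $m_2$, preserving structure as well, \emph{also} collapses the $V_1$ distances to zero (the opposite reading of $m_2$ on $V_1$ from yours, where community preservation is supposed to spread those nodes apart), and that the cross blocks vanish because the two methods' $V_1$-to-$V_2$ distances are uncorrelated. What remains is a correlation computed over the $(n-n_1)^2$ pairs in $V_2\times V_2$; bounding that correlation by $1$ and weighting by the fraction $\frac{(n-n_1)^2}{n^2}$ of surviving terms gives $dCor \le \bigl(1-\frac{n_1}{n}\bigr)^2$, and the elementary inequality $\frac{n_1^2}{n^2} < \frac{n_1}{n}$ then produces the strict bound $1-\frac{n_1}{n}$. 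That derivation also leans on informal assumptions (exact vanishing of blocks, and reinserting the $1/n^2$ count after the normalizations have cancelled), but it is precisely this block-cancellation-plus-term-counting step that manufactures the factor $1-\frac{n_1}{n}$; your projection/energy route has no analogue of that counting step, and that is exactly where it stalls.
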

\begin{proof}
% Without loss of generality, we can assume that the first $n_1$ nodes of graph $G$ are structurally similar.
Let $S^{m_1}$ and $S^{m_2}$ denote the pairwise distance matrices for methods $m_1$ and $m_2$, and $S'^{m_1}$ and $S'^{m_2}$ denote their doubly centered versions.
We now have,
\begin{align}\label{eq:firstSummation}
    dCov(X^{m_1}, X^{m_2}) = \frac{1}{n^2}\sum_{v, w \in V} S'^{m_1}_{v, w}S'^{m_2}_{v, w}
\end{align}
\begin{align}\label{eq:secSummation}
    dCov(X^{m_1}, X^{m_1}) = \frac{1}{n^2}\sum_{v, w \in V} (S'^{m_1}_{v, w})^2
\end{align}

We can divide the first summation (eqn. \ref{eq:firstSummation}) into four parts:
% \begin{align*}
%     dCov(X^{m_1}, X^{m_2}) &= \frac{1}{n^2}\sum_{j=1}^{n_1} \sum_{k=1}^{n_1} S'^{m_1}_{j, k}S'^{m_2}_{j, k} + \frac{1}{n^2}\sum_{j=1}^{n_1} \sum_{k=n_1+1}^{n} S'^{m_1}_{j, k}S'^{m_2}_{j, k}\\
%     &+ \frac{1}{n^2}\sum_{j=n_1}^{n} \sum_{k=1}^{n_1} S'^{m_1}_{j, k}S'^{m_2}_{j, k} + \frac{1}{n^2}\sum_{j=n_1}^{n} \sum_{k=n_1+1}^{n} S'^{m_1}_{j, k}S'^{m_2}_{j, k}
% \end{align*}
\begin{align*}
    dCov(X^{m_1}, X^{m_2}) &= \frac{1}{n^2}\sum_{v, w \in V_1} S'^{m_1}_{v, w}S'^{m_2}_{v, w} + \frac{1}{n^2}\sum_{v \in V_1, w \in V_2} S'^{m_1}_{v, w}S'^{m_2}_{v, w}\\
    &+ \frac{1}{n^2}\sum_{v \in V_2, w \in V_1} S'^{m_1}_{v, w}S'^{m_2}_{v, w} + \frac{1}{n^2}\sum_{v, w \in V_2} S'^{m_1}_{v, w}S'^{m_2}_{v, w}
\end{align*}

As $m_2$ preserves structural similarity, the distance between each pair of nodes in set $V_1$ will be 0 yielding the first term of above equation 0. Also, since $V_1$ and $V_2$ do not have a specified relation, the embedding distances by $m_1$ and $m_2$ will be randomly distributed and uncorrelated. Thus, the second and third terms become 0.
We can get similar results for second summation (eqn. \ref{eq:secSummation}) as well. From this, we get
\begin{align*}
    dCor(X^{m_1}, X^{m_2}) = \frac{\sum_{v, w \in V_2} S'^{m_1}_{v, w}S'^{m_2}_{v, w}}{\sqrt{\sum_{v, w \in V_2} (S'^{m_1}_{v, w})^2}\sqrt{\sum_{v, w \in V_2} (S'^{m_2}_{v, w})^2}}
\end{align*}

As correlation between two variables is bounded by 1, from the above we get
\begin{align*}
    dCor(X^{m_1}, X^{m_2}) &\leq \frac{(n - n_1)^2}{n^2} = (1 - \frac{n_1}{n})^2 = 1 + \frac{n_1^2}{n^2} - \frac{2n_1}{n}
\end{align*}
Also, $n_1 < n$ and thus $\frac{n_1^2}{n^2} < \frac{n_1}{n}$. We thus get
\begin{align*}
    dCor(X^{m_1}, X^{m_2}) < 1 - \frac{n_1}{n}.
\end{align*}
\end{proof}

\begin{cor}
For a graph $G$ with $s$ sets of structurally similar nodes $\lbrace V_1 \ldots V_k\rbrace$ with $|V_i| = n_i$ and embedding methods $m_1$ and $m_2$ preserving purely structural and both structural and community properties respectively, the distance correlation bound is:
\begin{align*}
    dCor(X^{m_1}, X^{m_2}) < 1 - \sum_{i=1}^s\frac{n_i}{n}.
\end{align*}
\end{cor}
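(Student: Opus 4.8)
The plan is to imitate the proof of Theorem~\ref{thm:thm1}, replacing the two-way split of $V$ by an $(s+1)$-way split. Write $V_0 = V \setminus \bigcup_{i=1}^{s} V_i$ for the remaining nodes (which play the role of $V_2$ in Theorem~\ref{thm:thm1}, i.e.\ nodes lying in multiple communities but in no structurally-similar class), and set $N = \sum_{i=1}^{s} n_i$, so that $|V_0| = n - N$. First I would expand $dCov(X^{m_1},X^{m_2})$, $dCov(X^{m_1},X^{m_1})$ and $dCov(X^{m_2},X^{m_2})$ as double sums of products of doubly-centered pairwise distances over $V \times V$, exactly as in equations~(\ref{eq:firstSummation})--(\ref{eq:secSummation}), and then partition each double sum into the $(s+1)^2$ blocks indexed by the pair of classes $(V_a,V_b)$, $a,b \in \{0,1,\dots,s\}$, to which the two summation indices belong.

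Next I would argue that all but one block drops out. For a diagonal block $V_i \times V_i$ with $i \geq 1$: since $m_2$ is structure preserving, all pairwise embedding distances inside a structurally-similar class vanish, so every term carrying an $S'^{m_2}$ factor is zero, and likewise the purely structural $m_1$ contributes nothing to these blocks in $dCov(X^{m_1},X^{m_1})$. For an off-diagonal block $V_a \times V_b$ with $a \neq b$: the two classes stand in no prescribed relation, so (as in the argument for Theorem~\ref{thm:thm1}) the embedding distances produced by $m_1$ and by $m_2$ across such a block are randomly distributed and uncorrelated, hence their contributions vanish. What survives in all three quantities is only the block $V_0 \times V_0$, giving
\begin{align*}
    dCor(X^{m_1}, X^{m_2}) = \frac{\sum_{v, w \in V_0} S'^{m_1}_{v, w}S'^{m_2}_{v, w}}{\sqrt{\sum_{v, w \in V_0} (S'^{m_1}_{v, w})^2}\,\sqrt{\sum_{v, w \in V_0} (S'^{m_2}_{v, w})^2}}.
\end{align*}

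Finally, since a correlation is at most $1$ and the block $V_0 \times V_0$ contains $(n-N)^2$ entries, I would conclude $dCor(X^{m_1},X^{m_2}) \leq (n-N)^2/n^2 = \bigl(1 - \tfrac{N}{n}\bigr)^2 = 1 + \tfrac{N^2}{n^2} - \tfrac{2N}{n}$, and then use that some node lies in multiple communities, i.e.\ $N < n$, so $\tfrac{N^2}{n^2} < \tfrac{N}{n}$; substituting yields $dCor(X^{m_1},X^{m_2}) < 1 - \tfrac{N}{n} = 1 - \sum_{i=1}^{s}\tfrac{n_i}{n}$, as claimed.

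The main obstacle is the bookkeeping in the second step: one must verify that every one of the $(s+1)^2-1$ blocks other than $V_0 \times V_0$ genuinely vanishes, splitting the verification cleanly into the "diagonal'' case (structure preservation) and the "off-diagonal'' case (no prescribed relation). Note also that a naive induction on $s$ using Theorem~\ref{thm:thm1} does not go through directly, since after peeling off $V_1$ the remaining vertex set still contains the structurally-similar classes $V_2,\dots,V_s$ and is therefore not of the "purely multiple communities'' type the theorem requires; handling all blocks simultaneously avoids this. The only hypothesis needed beyond those of the theorem is $\bigcup_i V_i \neq V$, which is exactly what makes the inequality strict.
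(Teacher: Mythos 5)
Your proposal is correct and is essentially the paper's own argument: the paper's proof of this corollary simply says to break the summations of Theorem~\ref{thm:thm1} into blocks according to the structurally similar sets and proceed as in that theorem, which is exactly your $(s+1)^2$-block decomposition followed by the same final algebra. Your version is just a more explicit write-up of the same route (and your observation that $\bigcup_i V_i \neq V$ is needed for strictness is a reasonable point the paper leaves implicit).
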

\begin{proof}
The summation in Theorem~\ref{thm:thm1}, eqn.~\ref{eq:secSummation}, can be broken down into $2s$ parts and the rest follows as above.
\end{proof}

% The algorithm is specified in Algorithm \ref{alg:elaine}.

\subsection{Measuring Label Prediction Diversity}
We have now established the upper bound on correlation between the embeddings. We also know the following about predictions using Logistic Regression:

\begin{thm}\label{thm:thm2}
Consider two sets of feature spaces for data $\mathcal{D}$ represented as $X \in \mathbb{R}^{n\times d_1}$ and $Y \in \mathbb{R}^{n\times d_2}$ with labels for individual data points as $Z \in \mathbb{R}^n$. If logistic regression models trained on $(X, Z)$ and $(Y, Z)$ obtain accuracy of $a_X$ and $a_Y$ respectively, then we have the following bound for the model trained on $(X\mathbin\Vert Y, Z)$, where $\mathbin\Vert$ denotes concatenation operation:
\begin{align*}
    a_{X\mathbin\Vert Y} \geq \max(a_X, a_Y)
\end{align*}
\end{thm}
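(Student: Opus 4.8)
The plan is to exploit the fact that the hypothesis class of logistic regression on the concatenated features $X \mathbin\Vert Y$ contains, as a subclass, the hypothesis classes of logistic regression on $X$ alone and on $Y$ alone, via zero-padding of weight vectors. A logistic regression model on $X$ is specified by a weight vector $w \in \mathbb{R}^{d_1}$ (and a bias), predicting the label of point $i$ through $\sigma(\langle w, X_{i,.}\rangle + b)$. Given such a $w$, set $\tilde w = (w, \mathbf{0}) \in \mathbb{R}^{d_1 + d_2}$. Then for every $i$ we have $\langle \tilde w, (X\mathbin\Vert Y)_{i,.}\rangle + b = \langle w, X_{i,.}\rangle + b$, so the padded model produces identical scores, identical predicted labels, and hence identical accuracy on $\mathcal{D}$. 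The symmetric construction (padding on the left) handles $Y$.

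First I would make this embedding precise and note that it is loss-preserving: both the logistic (cross-entropy) objective and the $0$--$1$ error of $\tilde w$ evaluated on $(X\mathbin\Vert Y, Z)$ coincide with those of $w$ on $(X, Z)$. In particular, the $X$-model realizing accuracy $a_X$ corresponds to a point $\tilde w_X$ in the parameter space of the concatenated problem with accuracy exactly $a_X$, and likewise there is a $\tilde w_Y$ with accuracy $a_Y$. So among logistic hypotheses on $X \mathbin\Vert Y$ there is one of accuracy at least $\max(a_X, a_Y)$.

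Next I would connect this to the \emph{trained} concatenated model. If that model is defined as the minimizer of training $0$--$1$ error over logistic hypotheses, the bound $a_{X\mathbin\Vert Y} \geq \max(a_X, a_Y)$ is immediate from the previous paragraph. If it is the minimizer of empirical logistic loss, then its loss is no larger than that of $\tilde w_X$ and of $\tilde w_Y$, since those are feasible points.

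The main obstacle is precisely the gap between the surrogate objective optimized by logistic regression (cross-entropy) and the $0$--$1$ accuracy that appears in the statement: a smaller log-loss does not pointwise imply fewer misclassifications, so the literal ``trained model'' reading is not true for arbitrary distributions. I expect to resolve this by adopting the cleaner reading in which $a_{X\mathbin\Vert Y}$ denotes the best accuracy attainable by a logistic model on the concatenated features — for which the zero-padding witness of Step~1 already suffices — together with a remark that adding an $\ell_2$ penalty with a sufficiently small coefficient preserves the inequality up to a term that vanishes as the coefficient tends to zero, and that on the training set with a rich enough feature map the log-loss minimizer is also accuracy-optimal. I would state this assumption explicitly rather than grind through a general surrogate-to-accuracy reduction.
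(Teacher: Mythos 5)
Your zero-padding argument is essentially the paper's own proof: the paper assumes without loss of generality $a_X > a_Y$ and notes that, since logistic regression is additive, setting the weights corresponding to $Y$ (to zero) gives a concatenated model attaining accuracy $a_X$, which is exactly your witness construction. The surrogate-loss caveat you flag — that the cross-entropy-trained concatenated model need not attain the accuracy of this zero-padded witness — is a genuine gap, but it is equally present in the paper's one-line proof, and your explicit fix (reading $a_{X\mathbin\Vert Y}$ as the best accuracy attainable over logistic hypotheses on the concatenated features) only makes the argument more careful than the original.
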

\begin{proof}
Without loss of generality, assume that $a_X > a_Y$. As logistic regression is an additive model, setting weights of the model corresponding to $Y$ would yield the accuracy of the concatenated model $a_X$.
\end{proof}

From the above theorem, we note that adding embeddings of method $m_2$ on $m_1$ would not decrease the performance.
Further, the equality in Theorem~\ref{thm:thm2} is realized when $Y$ is a linear scaling of $X$ or distances in $Y$ are exactly correlated with $X$.
But from Theorem~\ref{thm:thm1} we have an upper bound on the correlation between the embeddings.
Thus, we can get $a_{X\mathbin\Vert Y} > \max(a_X, a_Y)$.
Tighter bounds are left as a future work.

\subsection{Runtime Optimization Techniques}
Given a set of $k$ embedding methods $\lbrace m_1 \ldots m_k \rbrace$ with optimal hyperparameters $\lbrace \mathcal{h}_1 \ldots \mathcal{h}_k\rbrace$ and the maximum time complexity from the methods as $T$ per unit dimension, a naive implementation of finding the optimal combination of methods would take a time complexity of $O(2^k \times T \times d)$, where $d$ is the embedding dimensionality. To optimize this, we do an approximation by greedily adding the next method's embedding to the current set of embeddings. This yields a time complexity of $O(k\times T \times d)$.

\subsection{Algorithm}
\begin{algorithm}[tb]\color{black}
\Fn{graphensemble (Graph $G$, Embedding methods $\mathcal{M} = \lbrace m_1, .., m_K\rbrace$)}{
train, val, test $\leftarrow$ splitNodeIndexes($G$)\;
 \For{$i=1 \ldots K$}{
     $X^{m_i}$ $\leftarrow$ getEmbedding($G$, $m_i$)\;
     training(model, $X^{m_i}$[train])\;
    %  X^{m_i}_{train}, X^{m_i}_{val}, X^{m_i}_{test}  = splitdata(X^{m_i})\;
     $a_i$ $\leftarrow$ evaluate(model, $X^{m_i}$[val])\;
 }
 sortedmethods $\leftarrow$ Sort $\mathcal{M}$ based on $a$\;
 
 ensembleset $\leftarrow$ set()\;
 $a_{prev}$ $\leftarrow$ 0\;
  \For{$m$ $\in$ sortedmethods }{
     ensembleset.add(m)\;
     $X$ $\leftarrow$ concat($X^{m_i}$ $\forall$ $m_i$ $\in$ ensembleset)\;
     training(model, $X$[train])\;
     $a_e$ $\leftarrow$ evaluate(model, $X$[val])\;
     \If{$a_e$ $<$ $a_{prev}$}{
         ensembleset.remove(m)\;
     }
     $a_{prev}$ $\leftarrow$ $a_e$\;
    
 }
 $X$ $\leftarrow$ concat($X^{m_i}$ $\forall$ $m_i$ $\in$ ensembleset)\;
 training(model, $X$[train])\;
 $a_{test}$ $\leftarrow$ evaluate(model, $X$[test])\;
 return $a_{test}$}
\caption{graphensemble}
\label{alg:ensemble}
\end{algorithm}
Algorithm~\ref{alg:ensemble} provides the pseudo-code for the framework.
Given an input graph $G$, we split the graph nodes into training, validation and test.
We then use the validation set to get an accuracy score for each embedding method.
Based on this, we greedily add the next best embedding approach to evaluate the performance of the ensemble of methods.
Finally, we report the performance on a held-out test set.
In the experiments below the above step is performed 5 times and the average is reported.
% \subsection{Time Complexity}
% \subsection{Timing Analysis}
\section{Experiments}\label{sec:ens_exp}
In this section, we establish the Graph Ensemble approach against five state-of-the-art baseline embedding methods to evaluate their multi-label node classification performance on four benchmark datasets. In addition, we yield insights into the correlation of graph embedding obtained by the different methods.

\subsection{Datasets}
\begin{table}[!t]
\label{balance}
\centering
\begin{tabular}{c|c|c|c}
\hline
Dataset & Nodes  & Edges & Classes\\
\hline
PPI & 3,890  & 38,839 & 50\\
BlogCatalog & 10,312 & 333,983 & 39\\
Citeseer & 3,312 & 4,660 & 6\\
Wikipedia & 4,777 & 92,512 & 40 \\
% \multirow{2}{5.5 em}{\centering{node2vec,sdne hope,lap,gf}} & \multirow{2}{5 em}{\centering{64,128 64,64,32}} & \multirow{2}{4.5 em}{\centering{\textbf{0.402 (20.7\%)}}} & \multirow{2}{4.5 em}{\centering{\textbf{0.201 (16.9\%)}}}\\
% & & & \\
\hline
\end{tabular}
\label{tab:data}
\caption{Statistics of benchmark datasets in the experiment}
\end{table}
As Table 3 shows, we use four benchmark real-life graphs for node classification task in our experiment. For each dataset, we derive the largest weakly connected component from the original graph.

\begin{itemize}
	\item Protein-Protein Interactions (PPI)\cite{breitkreutz2008biogrid}: This is a network of biological interactions between proteins in humans. This network has 3,890 nodes and 38,739 edges.
	\item BlogCatalog\cite{tang2009relational}:  This is a network of social relationships of the bloggers listed on the BlogCatalog website. The labels represent blogger interests inferred through the metadata provided by the bloggers. The network has 10,312 nodes, 333,983 edges and 39 different labels.
	\item Citeseer\cite{lu2003link}: This dataset consists of 3312 scientific publications classified into one of six classes. The citation network consists of 4732 links.
	\item Wikipedia\cite{mahoney2011large}: This is a cooccurrence network of words appearing in the first million bytes of the Wikipedia dump. The labels represent the Part-of-Speech (POS) tags inferred using the Stanford POS-Tagger. The network has 4,777 nodes, 184,812 edges, and 40 different labels.
\end{itemize}

\subsection{Baseline Graph Embedding Methods}
% Different methods aim to capture different properties of the original graph and yield to variant embeddings. Matrix factorization based models, such as GF and LAP, decompose graph matrix to employ spectral learning. Autoencoder based approach like SDNE captures first and second network proximities. HOPE preserves higher order proximity of nodes. Random walk based method Node2vec preserves both graph community structure and node structure equivalence.
We compare our Graph Ensemble method with the following five baseline graph embedding models. 

\begin{itemize}
	\item Graph Factorization (GF)\cite{ahmed2013distributed}: It factorizes the adjacency matrix with regularization.
	\item Laplacian Eigenmaps (LAP)\cite{belkin2002laplacian}: It preserves local information by projecting points into a low-dimensional space using eigen-vectors of the graph.
	\item High Order Proximity Preserving(HOPE)\cite{ou2016asymmetric}: It factorizes the higher order similarity matrix between nodes using generalized SVD.
	\item Structural Deep Network Embedding(SDNE)\cite{wang2016structural}: This uses deep auto-encoders to preserve the first and second order network proximities by using non-linear functions to obtain the embedding. 
	\item node2vec\cite{grover2016node2vec}: It is an embedding technique that uses random walks on graphs to obtain node representations which preserves higher order proximity between nodes.
\end{itemize}

\subsection{Graph Ensemble Approach}
Our graph representation ensemble learning mechanism leverages a bag of single embedding methods and achieves an optimal embedding combination for graph feature learning. First, we run single graph embedding methods on the original graph to get the best embedding at each dimension. Then, we use the greedy approximated search to add embedding generated by other methods iteratively to the embedding given by the best single method. In the end, we feed the ensemble concatenation embedding and baseline method embedding to the downstream multi-label node classification task. At each experiment round, we split the nodes of a graph into training data (50\%), validation data (20\%) and test data (30\%). Using training data is intended to find the best hyperparamter for single methods. We choose the optimal ensemble embedding combination based on the validation data. And we report the performance of our graph ensemble methods and five baseline methods on test data.

\subsubsection{Hyperparameter Search}
In order to get the best embedding for each single graph embedding model, we employ a best hyperparamter search on the training dataset. Among three embedding dimensions 32, 64 and 128, we select the best hyperparameter set respectively at each dimension. Except for LAP which does not contain hyperparamters, we use grid search on a range of hyperparameter sets for the other four methods. For GF, we search parameters including learning rate from \{1e-3, 1e-2, 1e-1\} and regularization from \{1e-1, 1, 10\}. For HOPE, we select a decaying factor from \{1e-4, 1e-3, 1e-2, 1e-1\} and similarity function from Katz Index, PageRank, Common Neighbours and Adamic-Adar. For SDNE, we fix the autoencoder structure 500, 1000, 300 nodes in each layer, and set first loss function parameter $\alpha$ to 1e-5 and penalty $\beta$ to 10. We select two regularization factors and $xeta$ from \{1e-3, 1e-2\} respectively. As for Node2vec, we set walk length to 80, number of walks to 10, context size to 10. We select return $p$ and in-and-out $q$ from \{0.25, 0.5, 1, 2, 4\} respectively.

\subsubsection{Ensemble Combination Search}
After obtaining the best hyperparameter set for each method at each dimension, we evaluate their performance on multi-label node classification task with validation dataset and select the optimal ensemble combination. First, we choose the best method which has best performance on the training data. We test its performance on validation data under best setting in respect to three dimensions 32, 64 and 128, and then select its best dimension based on Macro $F_1$ score. Secondly, we append the embedding of the second best method at three dimensions separately to the best embedding so far and repeat the evaluation process. If the performance improves, we keep the second embedding at the chosen dimension. Otherwise we abandon this method and continue the appending process. In the end, we will obtain the best combination iteratively via such greedy approximation.

\subsection{Embedding Correlation }
% \begin{figure*}[!ht]
% \centering
% \vspace{-0.8em}
% \includegraphics[width=0.9\textwidth]{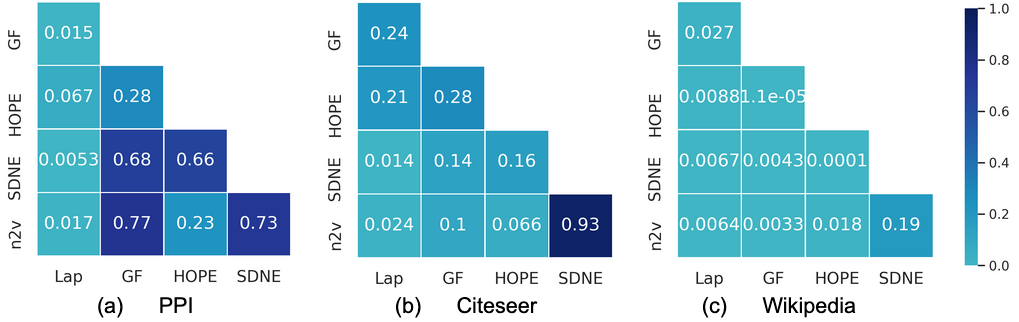}
% \vspace{-1em}
% \caption{RV Coefficient correlations of embedding methods on real networks (dimensions set to 128).}
% % \vspace{+0.8em}
% \label{fig:rv}
% \end{figure*}
\begin{figure*}[!ht]
\centering
% \vspace{-0.8em}
\includegraphics[width=0.9\textwidth]{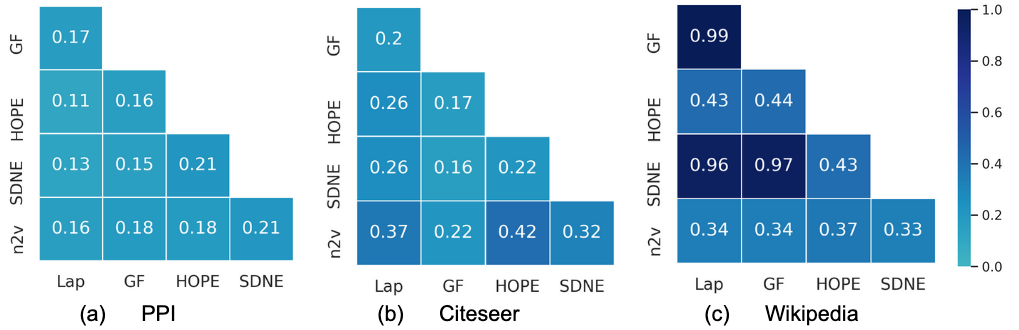}
\vspace{-1em}
\caption{Distance correlations of embedding methods on real networks (dimensions set to 128).}
\vspace{-0.8em}
\label{fig:dc}
\end{figure*}
The distance correlations between the embeddings obtained by different embedding methods is presented in Figure~\ref{fig:dc}.
We observe that the correlation between the embeddings varies significantly with the underlying data set.
For PPI and Citeseer, we see that all methods are weakly correlated.
This strengthens our claim in Theorem~\ref{thm:thm1} that embedding methods preserve different properties and if the underlying graph is complex, then the embeddings will be diverse.
For the Wikipedia dataset we observe that Graph Factorization and Laplacian Eigenmaps have a very high correlation.
As they both capture first order proximity, the correlation may be because the first order correlations in the Wikipedia dataset may have a simple pattern easily visible to both these approaches.
We also observe that SDNE which preserves first and second order proximity in a non-linear way also has high correlations with GF and Lap further strengthening our claim.

\subsection{Multi-label Node Classification}
In the multi-label node classification task, we are given a graph as well as labels of a proportion of nodes as training data. And we aim to predict the unknown labels for the rest of nodes in the test data. Each node in the graph has one or multiple labels. To evaluate the graph ensemble embedding and baseline methods embedding, we utilize the same One-Vs-the-Rest multi-label strategy and Logistic Regression by default setting to build classifiers. To ensure the robustness of our proposed graph ensemble methods and stability of the experiments, we repeat the whole process for 5 rounds and report the average results. We use Macro $F_1$ and Micro $F_1$ as evaluation metrics. Micro $F_1$ has similar performance like Macro $F_1$ thus it is not reported in the paper. We care more about the minority class prediction and Macro $F_1$ is preferably considered. 

We summarize multi-label classification results in Table~\ref{tab: ensembleresults}.
Overall, we observe that the ensemble of methods outperforms individual methods significantly with the exception of Citeseer.
$node2vec$ gives highest accuracy for all data sets except Wikipedia for which HOPE outperforms other methods.
Another key observation is that the optimal embedding dimensionality for a method in an ensemble may be different than the individual optimal.
This can be attributed to the interplay of embeddings when concatenated together and the amount of information shared between them.

\subsection{Minority Class}

\begin{figure}[h!]%
    \centering
    \vspace{-1em}
    \subfloat[Node Classification on Citeseer]{{\includegraphics[width=0.24\textwidth]{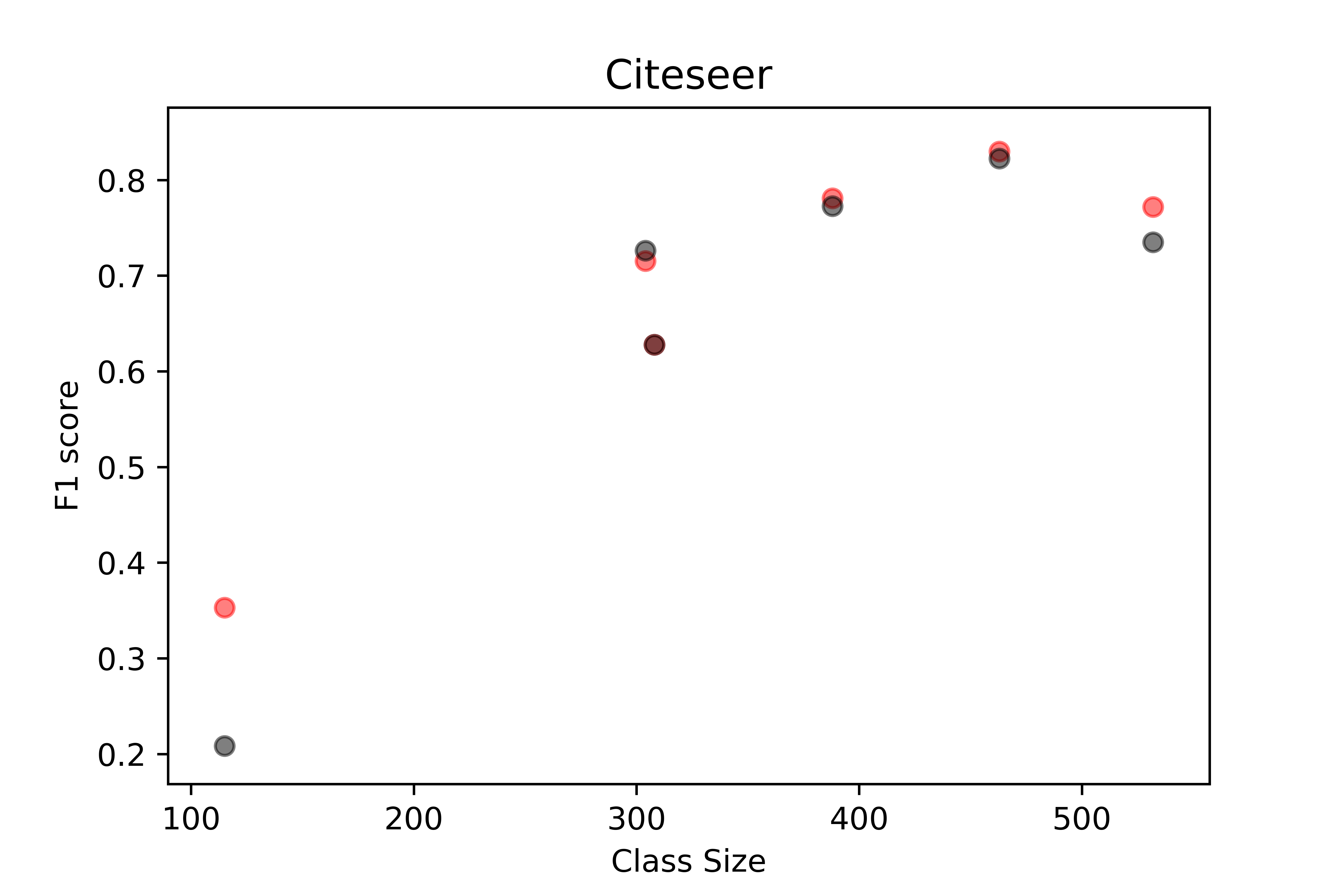}}}%
    % \qquad
    \subfloat[Node Classification on Wikipedia]{{\includegraphics[width=0.24\textwidth]{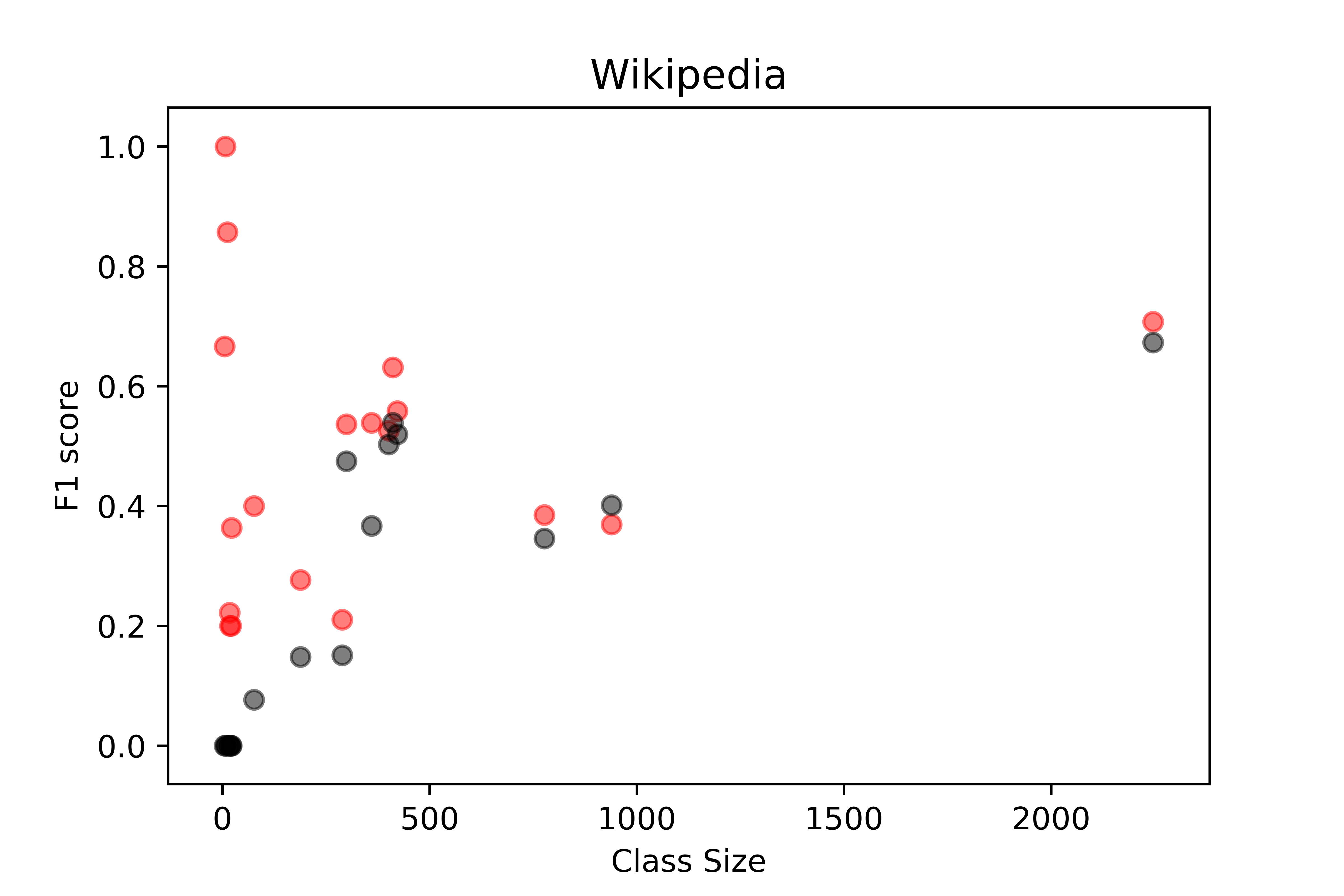}}}%
    \caption{Node classification results on two Citeseer and Wikipedia Dataset. Y-axis is $F_1$ score on each class. The red spots represent the graph ensemble learning and the black spots represent the best single graph embedding method's performance}%
    \vspace{-1em}
    \label{fig:motiv2}%
\end{figure}

As the Figure 5 indicates, the $F_1$ score of our graph ensemble methods on smaller classes are higher than the best individual methods. Our graph ensemble strategy combines the captured features derived by all single methods and generate a comprehensive graph embedding, which is able to improve the performance on less represented classes.
In Wikipedia, we observe that for really small classes, none of the individual methods perform well and give close to 0 F1. However, the combination ensemble is able to perform well and gives F1 ranging from 0.2 to 1.0. Similarly in Citeseer, we see an improvement of about 50\% for less represented labels.

\begin{table}[!ht]\footnotesize
\label{balance}
\centering
\begin{tabular}{c|c|c|c}
\hline
Dataset & Method & Dimensions & Macro-F1\\
\hline
\multirow{6}{4.5 em}{\centering{PPI}} & gf & 128 & 0.118 \\ & lap & 128 & 0.077 \\& hope & 128 & 0.144 \\& sdne & 32 & 0.159 \\& node2vec & 128 & \underline{0.179}\\
 & node2vec,hope,gf,lap & 64,32,128,64 & \textbf{0.192 (7.3\%)}\\
\hline

\multirow{6}{4.5 em}{\centering{BlogCatalog}}& gf & 128 & 0.044\\ &lap & 128 & 0.047 \\& hope & 128 & 0.137 \\& sdne & 128 & 0.212\\& node2vec & 128 & \underline{0.225}\\
 & node2vec,sdne,lap,gf & 128,32,128,128 & \textbf{0.243 (8.0\%)}\\
\hline

\multirow{6}{4.5 em}{\centering{Citeseer}}& gf & 128 & 0.442\\ & lap & 32 & 0.388 \\ & hope & 128 & 0.517\\& sdne & 128 & 0.513\\ &  node2vec & 128 & \underline{0.671}\\
 & node2vec,sdne,hope,gf,lap & 128,128,128,32,32 & \textbf{0.673 (0.3\%)}\\
\hline

\multirow{6}{4.5 em}{\centering{Wikipedia}}& gf & 64 & 0.042\\ & lap & 128 & 0.034\\ & hope & 128 & \underline{0.172} \\ & sdne & 128 & 0.032 \\& node2vec & 64 & 0.110\\
 &hope,node2vec,sdne,gf,lap & 128,64,128,128,64 & \textbf{0.181 (5.2\%)}\\
\hline
% lap & 32 & 0.197 & 0.037\\
% hope & 128 &\textbf{ 0.333} & \textbf{0.172}\\
% sdne & 128 & 0.316 & 0.155 \\
% node2vec & 32 & 0.274 & 0.120\\ 
% \hline
% \multirow{2}{5.5 em}{\centering{node2vec,sdne hope,lap,gf}} & \multirow{2}{5 em}{\centering{64,128 64,64,32}} & \multirow{2}{4.5 em}{\centering{\textbf{0.402 (20.7\%)}}} & \multirow{2}{4.5 em}{\centering{\textbf{0.201 (16.9\%)}}}\\
% & & & \\
% \hline
\end{tabular}
\label{tab: ensembleresults}
\caption{Macro $F_1$ score of Graph Ensemble methods and five baseline graph embedding methods. The scores with underline show the best performance of single method on different datasets. The percentage inside parentheses indicates the performance gain of Graph Ensemble method over the best single method. }
\end{table}

\section{Conclusion}\label{sec:conclusion}
% This paper introduced dyngraph2vec, a model for capturing temporal patterns in dynamic networks.
% It learns the evolution patterns of individual nodes and provides an embedding capable of predicting future links with higher precision.
% We propose three variations of our model based on the architecture with varying capabilities.
% The experiments show that our model can capture temporal patterns on synthetic and real datasets and outperform state-of-the-art methods in link prediction. There are several directions for future work: (1) interpretability by extending the model to provide more insight into network dynamics and better understand temporal dynamics; (2) automatic hyperparameter optimization for higher accuracy; and (3) graph convolutions to learn from node attributes and reduce the number of parameters. 

In this paper, we proposed a Graph Representation Ensemble Learning framework which can create an ensemble of graph embedding approaches outperforming each individual method.
We provided theoretical analysis of the framework and established the upper bound on the correlations between graph embedding techniques.
Further, we compared our method with state-of-the-art embedding methods and showed improvement on four real world networks.
We also showed that the model is even more useful for underrepresented classes.
There are several research directions for future work: (1) tighter ensemble bound to get a better understanding of the framework, (2) information theoretic approaches which can take into account the mutual information between embeddings, and (3) dynamic ensembles which can create ensemble learning for evolving graphs.

% \section{Acknowledgements}

\bibliographystyle{aaai}
\bibliography{bibliography}
\end{document}